\setlist{topsep=2pt,itemsep=0pt,partopsep=4pt, parsep=4pt}
\theoremstyle{plain}
\newtheorem{lemma}{Lemma}
\newtheorem{proposition}{Proposition}
\theoremstyle{definition}
\theoremstyle{remark}
\newcommand{\E}{\mathbf{E}}
\newcommand{\p}{\mathbf{P}}
\newcommand{\R}{\mathbb{R}}
\newcommand{\Var}{\mathbf{Var}}
\newcommand{\bM}{\mathbf{M}}
\newcommand{\bX}{\mathbf{X}}
\newcommand{\bc}{\mathbf{c}}
\begin{document}

\title{Stochastic modeling of  in vitro bactericidal potency}

\author{
Anita Bogdanov\thanks{
Department of Medical Microbiology and 
Immunobiology, University of Szeged,
D\'om t\'er 10, 6720 Szeged, Hungary;
e-mail: \texttt{varga-bogdanov.anita@med.u-szeged.hu}
},
P\'eter Kevei\thanks{
Bolyai Institute, University of Szeged, 
Aradi v\'ertan\'uk tere 1, 6720 Szeged, Hungary; 
e-mail: \texttt{kevei@math.u-szeged.hu}}, 
M\'at\'e Szalai\thanks{
Bolyai Institute, University of Szeged, 
Aradi v\'ertan\'uk tere 1, 6720 Szeged, Hungary; 
e-mail: \texttt{szalaim@math.u-szeged.hu}},
Dezs\H{o} Virok\thanks{
Department of Medical Microbiology and 
Immunobiology, University of Szeged,
D\'om t\'er 10, 6720 Szeged, Hungary;
e-mail: \texttt{virok.dezso.peter@med.u-szeged.hu}}
}

\date{}

\maketitle

\begin{abstract}
We provide a Galton--Watson model for the growth of a 
bacterial population in the presence of antibiotics.
We assume that bacterial cells either die or 
duplicate, and the corresponding probabilities 
depend on the concentration of the antibiotic. Assuming that 
the mean offspring number is given by $m(c) = 2 / (1 + \alpha c^\beta)$
for some $\alpha, \beta$, where $c$ stands for the 
antibiotic concentration we obtain weakly consistent, 
asymptotically normal estimator both for $(\alpha, \beta)$ and for 
the minimal inhibitory concentration (MIC), a relevant 
parameter in pharmacology. We apply our method to real data, where 
\emph{Chlamydia trachomatis} 
bacteria was treated by azithromycin and ciprofloxacin. 
For the measurements of \emph{Chlamydia} growth
quantitative PCR technique was used.
The 2-parameter model fits remarkably well to 
the biological data. \\

\noindent \emph{Keywords}: multitype Galton--Watson process, asymptotically normal
estimator, quantitative PCR, \emph{Chlamydia}, MIC. \\

\noindent \emph{MSC2020}: 60J85, 92C70,
\end{abstract}

\section{Introduction}

Since the discovery of penicillin, antibiotics have been used 
increasingly worldwide to treat bacterial infections.
As the overuse of antibiotics may results drug-resistant bacteria,
determining the bactericidal potency is of the utmost importance.

In the present paper the bacterial population is modeled by a 
Galton--Watson branching process.
The offspring distribution, in particular the offspring mean 
$m(c)$ depends on the antibiotic concentration $c > 0$ as
\begin{equation} \label{eq:m-form}
 m(c) = m_{\alpha, \beta}(c) = \frac{2}{1 + \alpha c^\beta},
\end{equation}
where $\alpha > 0$, $\beta > 0$ are unknown parameters.
Under this model the minimal inhibitory concentration (MIC),
the smallest antibiotic concentration preventing bacterial growth,
is the smallest $c$ for which $m(c) = 1$, that is $\alpha^{-1/\beta}$.
Based on measurements at different concentrations we 
obtain weakly consistent asymptotically normal estimator 
both for $(\alpha, \beta)$, and for the MIC.

We assume that the bacterial population is homogeneous, 
all the cells behave similarly. In particular, 
there is no resistant type. As mutation is rare under normal
conditions and in short time, this is a natural assumption
for our dataset. Long-term evolution of 
bacterial populations with both \emph{resistant} and 
\emph{susceptible} types was investigated in several papers using 
deterministic models,
see Svara and Rankin \cite{Svara}, Paterson et al.~\cite{Paterson}, and 
the references therein.
Closest to our model is the deterministic model given by 
Liu et al.~\cite{Liu}. In \cite{Liu} a deterministic expression 
for the number of colony forming units is obtained in terms 
of the antibiotic concentration. 

Branching processes are classical tools to model cell proliferation, see 
the  monographs by Haccou et al.~\cite{Haccou}, Kimmel and Axelrod 
\cite{Kimmel}. However, to the best of our knowledge 
for estimation of bactericidal potency of 
antibiotics only deterministic models are used.

In the experiments 
growth of \emph{Chlamydia trachomatis} bacterial population was a
analyzed by quantitative PCR (qPCR) method with 12 different antibiotic 
concentrations and 2 different antibiotics.

\emph{Chlamydiae} are obligate intracellular bacteria that primarily infect 
epithelial cells of the conjunctiva, respiratory tract and urogenital
tract. They  %Chlamydiae
have a unique developmental cycle, with two phenotypic
bacterial forms, the elementary body (EB) and the 
reticulate body (RB). The EB is the infectious form that
can be found outside of the host cells and it is not 
capable to multiply. After infection of the 
host cell, the EB differentiates to RB. 
The RB multiplies in the host cell by binary fission in
a specific area of the infected host  cell, the inclusion. After a 
certain period of time, depending on the chlamydial species,  
the RB redifferentiates to EB. 
The EB is then released from the host cell ready to infect 
new host cells. This unique life-cycle triggered lot of mathematical 
work to model the growth of the population. Wilson \cite{Wilson}
worked out a deterministic model taking into account the infected and 
uninfected host cells and the extracellular \emph{Chlamydia} concentration.
Wan and Enciso \cite{WanEnciso} formulated a deterministic model
for the quantities of RB's and EB's, 
and solved an optimal control problem  
to maximize the quantity of EB's when the host cell dies.
The same problem in a stochastic framework was investigated by 
Enciso et al.~\cite{Enciso} and Lee et al.~\cite{Lee}.
In these papers population growth is modeled without the presence 
of antibiotic.

There is a third form of the bacterium, the aberrant 
body or persistent body. This form is induced by 
various adverse environmental stimuli, such as the lack of nutrients 
and the presence of antibiotics, see Panzetta et al.~\cite{Panzetta}.
The persistent body is not capable to multiply. After elimination of 
the stress stimuli, the persistent body may 
reenter the normal developmental cycle, differentiates to RB,
multiplies and redifferentiates to EB. 
If there is an excess
of antibiotics reaching the 
so-called bactericide concentration, the bacterium is killed, and no 
multiplication can be observed. A lower antibiotic concentration does
not kill all of the bacterium, but leads to 
the formation of non-multiplying aberrant bodies. Further lowering
the antibiotic concentration more RB can be observed,
while the formation of aberrant body 
decreases. At very low antibiotic concentration, the antibiotic 
has no effect on 
the bacterial growth and all the bacteria enter the normal 
developmental cycle. 
Azithromycin and doxycycline are the most commonly
used antibiotics in \emph{Chlamydia} infections (Miller \cite{Miller}), 
but \emph{Chlamydiae} are also sensitive to quinolone type antibiotics 
(Vu et al.~\cite{Vu}).
In our study \emph{Chlamydia trachomatis} infected cells were treated with
azithromycin and the quinolone ciprofloxacin. The dose response curves, 
the concentration dependent impacts of these 
antibiotics on chlamydial growth were measured 48 hours post infection.
A major challenge is the accurate measurement of chlamydial growth.
The golden standard is 
the immunofluorescent labeling and manual counting of the chlamydial 
inclusions, which has several disadvantages, including that the concentration
of the individual bacteria 
cannot be counted. Instead of counting the bacterial cells, the quantity
of bacterial genomes (which is a constant times the number of bacteria) 
can also be measured. Chlamydial
genome concentration in the infected host cells can be measured by a 
quantitative polymerase chain reaction (qPCR). This method is accurate and 
theoretically measures the 
genome of all individual bacteria. 
Eszik et al.~\cite{Virok} developed a version of the qPCR, the 
so-called direct qPCR method for chlamydial growth monitoring. Direct qPCR 
is capable to perform 
qPCR measurements without the labor-intensive DNA purification.
The qPCR method gives a 
so-called cycle threshold (Ct) value to each bacterial sample.
If the effectivity of the qPCR is 100\% then the 
theoretical Ct value equals $a - \log_2 Z^{(i)}_{n; c, x_0}$, where $a \in \R$ is an
unknown constant and $Z_{n; c,x_0}^{(i)}$ stands for the total number 
of dead and alive bacterial cells at antibiotic  concentration $c > 0$,
after $n$ generations starting with $x_0$ bacterias, in experiment $i$.
Adding a measurement error, the measurements have the form 
\begin{equation} \label{eq:C-form1}
C_i(c, x_0) = a - \log_2 Z^{(i)}_{n; c, x_0} + \varepsilon_{i;c},
\qquad i = 1, \ldots, N,
\end{equation}
where  measurement error $\varepsilon_{i;c}$ is assumed to be Gaussian
with mean zero, and variance $\sigma_\varepsilon^2$. This simple linear
model is suggested by Yuan et al.~\cite{Yuan}.
Due to the measurement method lower Ct value means higher genome 
concentration. 
The dose response curves measured by a direct qPCR method are given in 
Figures \ref{fig:simmeas-azi} and \ref{fig:simmeas-cipro}.
\smallskip

The rest of the paper is organized as follows.
The model and some basic properties are given in Section \ref{sect:model}.
The estimator of $m(c)$ for $c$ fixed is provided in Section \ref{sect:m-est},
while in Section \ref{sect:proc} we consider different antibiotic 
concentrations together. Section \ref{sect:sim} contains a small simulation 
study, and real data is analyzed in Section \ref{sect:data}.
The proofs  are gathered together in the Appendix.

\section{The theoretical model} \label{sect:model}

We consider a simple Galton--Watson branching process where the offspring 
distribution depends on the antibiotic concentration $c \geq 0$. 
Each bacteria either dies (leaves no offspring), survives (leaves 1 offspring), 
or divides (leaves 2 offsprings) with respective 
probabilities $p_0 = p_0(c)$, $p_1 = p_1(c)$, and $p_2 = p_2(c)$.
Let $f(s) = f_c(s)$ denote the offspring generating function and 
$m = m(c)$ the offspring 
mean if the antibiotic concentration is $c$, i.e.
\begin{equation*} %\label{eq:genfc}
\begin{split}
f(s) &  = f_c(s) = \E s^{\xi_c} = \sum_{i=0}^2 p_{i}(c) s^i,  \quad s \in [0,1],\\
m & = m(c) = f'_c(1) = \E \xi_c,
\end{split}
\end{equation*}
where $\xi_c$ is the number of offsprings.
The process starts 
with $X_0 = x_0$ initial individuals, and
\begin{equation*} % \label{eq:Xproc}
X_{n+1;c} = \sum_{i=1}^{X_{n;c}} \xi_{i;c}^{(n)},
\end{equation*}
where $\{ \xi_c, \xi_{i;c}^{(n)}: i \geq 1, n \geq 1 \}$ are
independent and identically distributed (iid) random variables 
with generating function $f_c$.
Note that the offspring distribution does depend on the antibiotic
concentration $c$, but here and in the next section 
we suppress this dependence from the notation.

Using the qPCR method the observed quantity is the 
genom of all individual bacteria, which is a constant times the
\emph{total} number 
of bacteria, that is live and dead cells together. Therefore, we have to 
keep track of the dead bacterias too. In order to do this we 
consider a two-type Galton--Watson branching process
$\bX_n = (X_n , Y_n)$, $n \geq 0$, where $X_n$, $Y_n$ stands for the number 
of alive, dead bacterias respectively, in generation $n$. 
Then the total number of bacteria at generation $n$ is $Z_n = X_n + Y_n$.
We also write $Z_{n,x_0}$ to emphasize that $X_0 = x_0$.
The process evolves as
\begin{equation*} % \label{eq:XYproc}
\begin{split}
X_{n+1 } & = \sum_{i=1}^{X_{n }} \xi_{i }^{(n)} \\
Y_{n+1 } & = Y_{n } + \sum_{i=1}^{X_{n }} \eta_{i }^{(n)}, \quad n \geq 0,
\end{split}
\end{equation*}
$(X_0, Y_0) = (x_0, 0)$,
where 
$(\xi,\eta), (\xi_{i}^{(n)}, \eta_{i}^{(n)})$, 
$n=1,2,\ldots$, $i=1,2,\ldots$ are iid random 
vectors such that
$\p( (\xi ,\eta ) = (0,1) ) = p_0$, 
$\p( (\xi ,\eta ) = (1,0) ) = p_1$,
$\p( (\xi ,\eta ) = (2,0) ) = p_2$.
The offspring mean matrix $\bM$ has the form
\[
\bM =
\begin{pmatrix}
\E \xi & \E \eta \\
0 & 1 
\end{pmatrix}
=
\begin{pmatrix}
m & p_0 \\
0 & 1 
\end{pmatrix}.
\]
Next we determine the mean vector of $\bX_n$.

\begin{lemma} \label{lemma:evar}
If $x_0 = 1$ then for the mean we have
$\E X_n = m^n$, and 
$\E Y_n = p_0 (1 + m + \ldots + m^{n-1})$, thus
\[
\mu_n := \E Z_{n,1} = 
\begin{cases}
m^{n} \left( 1 + \frac{p_0}{m-1} \right) - \frac{p_0}{m-1}, & m \neq 1, \\
1 + p_0 n, & m = 1.
\end{cases}
\]
\end{lemma}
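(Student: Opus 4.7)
The plan is to compute $\E X_n$ and $\E Y_n$ separately using the recursions that define the two-type process, then add them. Both are one-line inductions once Wald's identity is applied to the inner sums.

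First I would handle $\E X_n$. The first coordinate evolves as a standard single-type Galton--Watson process: $X_{n+1} = \sum_{i=1}^{X_n} \xi_i^{(n)}$ with the $\xi_i^{(n)}$ iid of mean $m$ and independent of $\mathcal{F}_n := \sigma(\bX_0, \ldots, \bX_n)$. Conditioning on $\mathcal{F}_n$ and applying Wald's identity gives $\E[X_{n+1} \mid \mathcal{F}_n] = m X_n$, hence $\E X_{n+1} = m \E X_n$. Since $X_0 = 1$, induction yields $\E X_n = m^n$.

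Next I would do $\E Y_n$ by the same trick. From $Y_{n+1} = Y_n + \sum_{i=1}^{X_n} \eta_i^{(n)}$ and $\E \eta = p_0$, I get $\E[Y_{n+1} \mid \mathcal{F}_n] = Y_n + p_0 X_n$, so $\E Y_{n+1} = \E Y_n + p_0 m^n$. Telescoping from $Y_0 = 0$ gives $\E Y_n = p_0 \sum_{k=0}^{n-1} m^k$, which is the claimed formula.

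Finally I would write $\mu_n = \E X_n + \E Y_n = m^n + p_0 \sum_{k=0}^{n-1} m^k$ and split on whether $m = 1$. When $m \neq 1$ the geometric sum equals $(m^n - 1)/(m-1)$, and rearranging gives $m^n(1 + p_0/(m-1)) - p_0/(m-1)$; when $m = 1$ the sum equals $n$, yielding $1 + p_0 n$. There is no real obstacle here: the statement is a direct computation. The only alternative worth mentioning is diagonalizing $\bM$ (eigenvalues $m$ and $1$) and reading off $\bM^n (1,0)^\top$, but the inductive route is cleaner and avoids splitting into cases until the very end.
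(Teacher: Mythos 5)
Your proof is correct and follows essentially the same route as the paper: condition on generation $n$ to obtain the mean recursion and iterate. The paper merely packages the two coordinate recursions as $\E \bX_n = \bX_0 \bM^n$ and computes $\bM^n$ by induction, whereas you unravel the two coordinates separately; the underlying computation is identical.
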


We note that the covariance matrix of $\bX_n$ can be determined
explicitly. The computation is straightforward but rather lengthy.
Since we only need the explicit form of the mean and the finiteness
of the second moments, we skip the computation.

The strong law of large numbers and the central limit theorem
imply that for each fixed $n$ as $x_0 \to \infty$
\begin{equation*} % \label{eq:slln}
\frac{Z_{n,x_0}}{x_0} \longrightarrow \mu_n \quad \text{a.s.}
\end{equation*}
and
\begin{equation} \label{eq:clt}
\frac{Z_{n, x_0} - x_0 \mu_n}{\sqrt{x_0}} 
\stackrel{\mathcal{D}}{\longrightarrow} 
\ N(0, \sigma_n^2),
\end{equation}
where $\stackrel{\mathcal{D}}{\longrightarrow} $ stands for 
convergence in distribution, and 
\[
\sigma_n^2 = \Var (Z_n ).
\]

It is clear that the geometric growth rate  of $\E Z_n$ is 
the offspring mean $m$, 
while the precise distribution determines only the constant 
factor.
Simple analysis shows that if 
$m = p_1 + 2 p_2 > 1$ then
\begin{equation} \label{eq:mu-bound}
m^n \leq \mu_n = \frac{p_2 m^n - p_0}{m-1} \leq 
\frac{m(m^n -1)}{2(m-1)} + 1,
\end{equation}
if $m = 1$ then 
\begin{equation} \label{eq:mu-bound2}
1 \leq \mu_n = 1 + p_0 n \leq 1 + \frac{n}{2},
\end{equation}
while for $m < 1$
\begin{equation} \label{eq:mu-bound3}
1 \leq \mu_n = \frac{p_0 - p_2 m^n }{1-m} \leq 
\frac{m(1-m^n)}{2(1-m)} + 1.
\end{equation}
The upper bound is attained
at $(p_0, p_1, p_2) = (1 - m/2, 0, m/2)$, while 
the lower bound is attained at
$(p_0, p_1, p_2) = (0, 2-m, m-1)$ for $m \geq 1$, and 
at $(p_0, p_1, p_2) = (1-m, m, 0)$ for $m \leq 1$.

The process $(X_n)$ is a single type Galton--Watson process 
with offspring mean $m = p_1 + 2 p_2$.
If $m \leq 1$ then the process dies out 
almost surely, while 
if the process is supercritical, i.e.~$m > 1$ then
the probability of extinction is the smaller 
root of $f(q) = q$, which is
$q = {p_0}/{p_2}$.
By the martingale convergence theorem
\begin{equation} \label{eq:W-lim}
\frac{X_n}{m^n} \to W \quad \text{a.s.},
\end{equation}
where $W$ is a nonnegative random variable. 
For $m\leq 1$ clearly $W \equiv 0$, while
if $m > 1$ then $\p ( W = 0) = q$, 
and the distribution of $W$ is 
absolutely continuous on $(0,\infty)$.

The process $\bX_n = (X_n, Y_n)$ is decomposable, because 
$(\bM^n)_{2,1} = 0$ for any $n$. Limit theorems for 
supercritical decomposable processes
were obtained by Kesten and Stigum \cite{KestenStigum}. The eigenvalues 
of $\bM$ are $m$ and $1$, therefore the process is supercritical if and 
only if $m > 1$. 
Applying Theorem 2.1 by Kesten and Stigum \cite{KestenStigum} we obtain
for $m > 1$ that 
\[
\lim_{n \to \infty} \frac{1}{m^n} ( X_n , Y_n ) = 
W \left( 1 , \frac{p_0}{m-1} \right),
\]
where $W$ is the nonnegative random variable from \eqref{eq:W-lim}.

\section{Estimation of the offspring mean} \label{sect:m-est}

Recall that the measurements are given in the form
\eqref{eq:C-form1}, where $Z_{n;c,x_0}^{(i)}$ stands for the 
total number of dead and alive bacteria at generation $n$, starting 
with $x_0$ bacteria under antibiotic concentration $c$ at experiment 
$i$, $i=1,2,\ldots, N$.
We assume that the sequence $\{ \varepsilon_{i;c} : i \geq 1, c \geq 0 
\}$
are iid, independent of the process $\bX_n$, and is Gaussian with 
mean 0 and variance $\sigma_\varepsilon^2$.

By \eqref{eq:clt}
\[
\begin{split}
\log_2 Z_{n;c,x_0}^{(i)} & = \log_2 ( x_0 \mu_n) + 
\log_2 \left( 1 + \frac{Z_{n;c,x_0}^{(i)} - x_0 \mu_n}{ x_0 \mu_n} \right) \\
& = \log_2 ( x_0 \mu_n) + \frac{1}{\sqrt{x_0} \log 2 } \frac{\sigma_n}{\mu_n} \zeta_i
+ o_\p(x_0^{-1/2}),
\end{split}
\]
where $(\zeta_i)_{i=1,\ldots, N}$ is a sequence of iid $N(0,1)$ random 
variables.
This implies as $x_0 \to \infty$
\begin{equation*} % \label{eq:C-form2}
C_i(c,x_0) = a -  \log_2 (x_0 \mu_n )
+ \varepsilon_{i;c}
- \frac{1}{\log 2 \, \sqrt{x_0}} \frac{\sigma_n}{\mu_n} \zeta_i
+ o_\p(x_0^{-1/2}).
\end{equation*}
Put
\begin{equation*} % \label{eq:muhat}
\log_2 \widehat \mu_n = 
a - \log_2 x_0 - \frac{\sum_{i=1}^N C_i(c,x_0)}{N}.
\end{equation*}
In what follows $\stackrel{\p}{\longrightarrow}$ stands for 
convergence in probability. By the law of large numbers and the 
central limit theorem, we have the following.

\begin{proposition} \label{prop:est}
As $x_0 \to \infty$ and $N \to \infty$
\[
\log_2 \widehat \mu_n 
\stackrel{\p}{\longrightarrow}  \log_2 \mu_n,
\]
which implies that $\widehat \mu_n$ is a weakly consistent
estimator of $\mu_n$.
Furthermore, 
\[
\frac{1}{\sigma_\varepsilon}
\sqrt{N} \left[ 
\log_2 \widehat \mu_n - \log_2 \mu_n
\right] 
\stackrel{\mathcal{D}}{\longrightarrow} N(0,1),
\]
which implies that 
\[
\frac{1}{\sigma_\varepsilon \mu_n \log 2} \,
\sqrt{N} \left( \widehat \mu_n - \mu_n \right) 
\stackrel{\mathcal{D}}{\longrightarrow} N(0,1).
\]
\end{proposition}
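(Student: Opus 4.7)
My plan is to substitute the expansion of $C_i(c,x_0)$ already derived just before the proposition into the definition of $\log_2 \widehat\mu_n$, and then peel off the two sources of randomness (measurement noise $\varepsilon_{i;c}$ and the demographic fluctuation $\zeta_i$) separately. Averaging that expansion over $i=1,\ldots,N$ and using the definition of $\log_2 \widehat \mu_n$ gives
$$\log_2 \widehat\mu_n - \log_2 \mu_n \;=\; -\bar\varepsilon_N \;+\; \frac{1}{\log 2 \, \sqrt{x_0}}\,\frac{\sigma_n}{\mu_n}\,\bar\zeta_N \;+\; o_\p(x_0^{-1/2}),$$
where $\bar\varepsilon_N$ and $\bar\zeta_N$ denote the sample means of the two iid sequences. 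This identity is the backbone of everything that follows.

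For the consistency statement I would simply invoke the weak law of large numbers: $\bar\varepsilon_N\stackrel{\p}{\to}0$ as $N\to\infty$, and $\bar\zeta_N \stackrel{\p}{\to} 0$ (the latter is additionally dampened by $x_0^{-1/2}$ as $x_0\to\infty$), while the $o_\p(x_0^{-1/2})$ remainder vanishes by assumption. Slutsky's theorem then yields $\log_2\widehat\mu_n \stackrel{\p}{\to} \log_2 \mu_n$.

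For the first CLT I multiply the displayed identity by $\sqrt{N}/\sigma_\varepsilon$. The leading term $-\sqrt{N}\,\bar\varepsilon_N/\sigma_\varepsilon$ converges to $N(0,1)$ by the classical CLT for the iid Gaussian sample $\varepsilon_{i;c}$. The next term becomes $\frac{1}{\sigma_\varepsilon\log 2}\sqrt{N/x_0}\,\frac{\sigma_n}{\mu_n}\bar\zeta_N$ and the remainder becomes $o_\p(\sqrt{N/x_0})$; both are negligible provided the joint limit is taken so that $N/x_0 \to 0$, which I would state explicitly as the precise meaning of ``$x_0\to\infty$ and $N\to\infty$''. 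The second CLT on $\widehat\mu_n$ itself then follows from the delta method applied to $g(x)=2^x$, whose derivative at $\log_2\mu_n$ equals $\mu_n\log 2$, so the variance rescales by $(\mu_n\log 2)^2$ and produces exactly the normalization claimed.

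The main (still rather mild) obstacle is making the $o_\p(x_0^{-1/2})$ error rigorous. The expansion underlying it is $\log_2(1+u) = u/\log 2 + O(u^2)$ applied to $u = (Z_{n;c,x_0}^{(i)}-x_0\mu_n)/(x_0\mu_n)$, and by \eqref{eq:clt} each such $u$ is $O_\p(x_0^{-1/2})$, giving an individual quadratic remainder of $O_\p(x_0^{-1})$. What must be checked is that the \emph{averaged} remainder is still $o_\p(x_0^{-1/2})$, and, more delicately, that after multiplication by $\sqrt{N}$ in the CLT step it is $o_\p(1)$. The second moments of $Z_{n;c,x_0}^{(i)}$ are finite (Lemma \ref{lemma:evar} together with the remark on the covariance matrix), so Markov's inequality on the sum of $N$ squared deviations controls the aggregate remainder by $O_\p(N/x_0)$, and this is exactly where the regime $N/x_0 \to 0$ becomes essential.
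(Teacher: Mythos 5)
Your proposal is correct and follows essentially the same route as the paper, which gives no separate proof of this proposition: the displayed expansion of $C_i(c,x_0)$ preceding the statement, averaged over $i$, together with the law of large numbers and the central limit theorem for the Gaussian errors $\varepsilon_{i;c}$, and the delta method (equivalently a first-order Taylor expansion of $x \mapsto 2^x$) for the final assertion, is exactly the intended argument. One refinement: your condition $N/x_0 \to 0$ is sufficient but stronger than necessary. Since $\E Z^{(i)}_{n;c,x_0} = x_0\mu_n$ exactly and $\Var\bigl(Z^{(i)}_{n;c,x_0}\bigr) = x_0\sigma_n^2$, the linear term in the expansion of $\log_2(1+u_i)$ contributes, after averaging and multiplying by $\sqrt{N}$, a quantity of order $O_\p(x_0^{-1/2})$ by Chebyshev alone, with no constraint relating $N$ and $x_0$; only the quadratic remainder, whose average has expectation $O(1/x_0)$ (note $Z_n \geq x_0$ keeps $u_i$ bounded away from $-1$, so the Taylor constant is uniform), requires a rate condition, and $\sqrt{N}/x_0 \to 0$ already suffices. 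You are right, however, that the paper's bare statement ``$x_0\to\infty$ and $N\to\infty$'' silently presumes some such restriction on the joint limit, and flagging it is a genuine improvement over the text.
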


We see that from the observations $Z_{n;c,x_0}^{(i)}$ we cannot estimate 
$m$ itself, only $\mu_n$. For $m \in (0,2]$ fixed we obtained
sharp bounds for $\mu_n$ in \eqref{eq:mu-bound}, \eqref{eq:mu-bound2},
\eqref{eq:mu-bound3}. In Figure \ref{fig:mubound} we see the 
corresponding upper and lower bounds for $\log_2 \mu_n$ for $n = 10$.
We see that the larger 
values for $\mu_n$ implies more precise bound for $m$.
Furthermore, larger $n$ also implies more precise bound.
However, for $m \leq 1$ one cannot determine the value $m$. 
This is reasonable, since for both $p_0 = 1$ and 
$p_1 = 1$ we have $\mu_n = 1$, whereas $m= 0$ in the former 
and $m = 1$ in the latter case.

\begin{figure} 
\begin{center}
\includegraphics[width=0.9\textwidth]{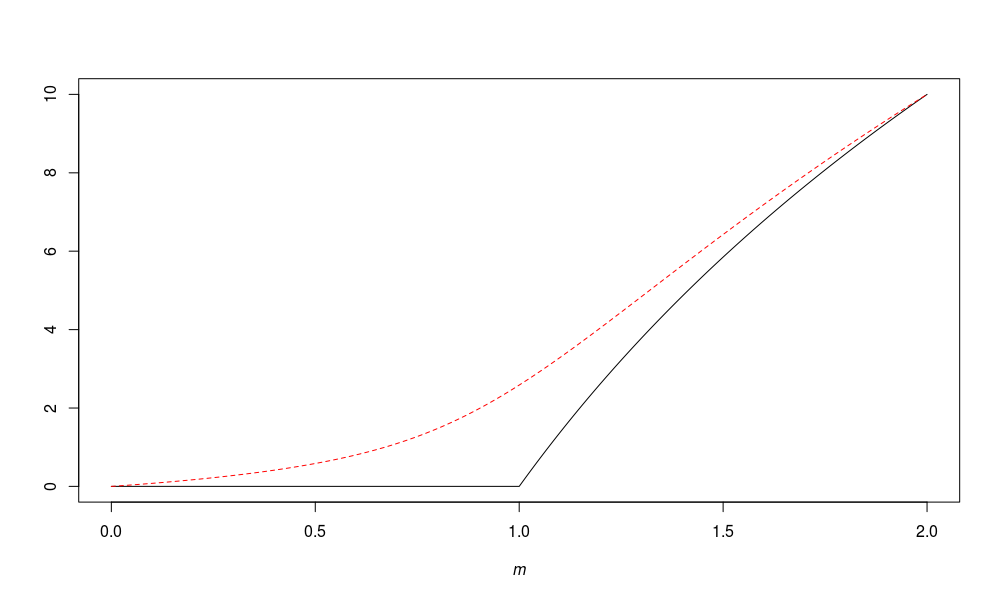}
\end{center}
\vspace*{-.5cm}
\caption{Upper and lower bound for $\log_2 \mu_n$ for $n = 10$.}
\label{fig:mubound}
\end{figure}

However, in real applications we may and do assume that $p_1 \equiv 0$.
This is clearly reasonable for \emph{bactericide} antibiotics, which either 
kill the bacteria, or let it duplicate. While, if a \emph{bacteriostatic} 
antibiotic blocks the duplication of a single bacteria then 
it keeps blocking in the later generations as well. Therefore, we can 
equally count a `blocked' bacteria as a dead one.
Assume now that $p_1 \equiv 0$. Then $\mu_n$ is Lemma \ref{lemma:evar}
simplifies to
\begin{equation*} % \label{eq:mu1}
\mu_n ( m) = \frac{m}{2} \left( m^{n-1} + \ldots + 1 \right) + 1
=
\begin{cases}
\frac{m (m^n -1)}{2 (m-1)} + 1, & m \neq 1, \\
\frac{n}{2} + 1, & m = 1.
\end{cases}
\end{equation*}
Then $\mu_n$ is a strictly increasing convex function,
$\mu_n(0) = 1$, $\mu_n(2) = 2^n$. Its inverse function $\psi_n: [1,2^n] \to [0,2]$
is continuous strictly increasing.
Define the estimate
\begin{equation*} % \label{eq:mhat}
\widehat m = \psi_n( \widehat \mu_n).
\end{equation*}
From Proposition \ref{prop:est} it follows that 
$\widehat m$ is a weakly consistent estimator of $m$, and
\[
\begin{split}
\psi_n ( \widehat \mu_n ) & = \psi_n \left( \mu_n + 
\frac{\zeta \sigma_\varepsilon \mu_n \log 2}{\sqrt{N}} 
+ o_\p ( N^{-1/2} ) \right) \\
& = \psi_n ( \mu_n) + \psi_n'(\mu_n ) 
\frac{ \zeta  \sigma_\varepsilon \mu_n \log 2}{\sqrt{N}} + o_\p( N^{-1/2}) \\
& = m + \psi_n'(\mu_n ) 
\frac{\zeta \sigma_\varepsilon \mu_n \log 2}{\sqrt{N}} + o_\p( N^{-1/2}),
\end{split}
\]
where $\zeta \sim N(0,1)$.
Noting that $\psi_n'(\mu_n(m)) = 1/\mu_n'(m)$ we obtain the following.

\begin{proposition} \label{prop:m-est}
Assume that $p_1 = 0$.
As $x_0 \to \infty$ and $N \to \infty$, 
$\widehat m$ is a weakly consistent estimator of $m$, and
\begin{equation*} %\label{eq:psi-asn}
\frac{\mu'_n(m)}{ \sigma_\varepsilon \mu_n(m) \log 2} 
\sqrt{N} ( \widehat m - m ) \stackrel{\mathcal{D}}{\longrightarrow}
N(0,1).
\end{equation*}
\end{proposition}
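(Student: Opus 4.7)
The plan is to leverage Proposition \ref{prop:est}, which already provides both weak consistency of $\widehat \mu_n$ and a central limit theorem for it, and to transfer these properties to $\widehat m = \psi_n(\widehat \mu_n)$ via the continuous mapping theorem and the delta method. Under the assumption $p_1 = 0$ one can rewrite $\mu_n(m) = \tfrac{1}{2}(m + m^2 + \dots + m^n) + 1$, so $\mu_n$ is a smooth, strictly increasing bijection from $[0,2]$ onto $[1, 2^n]$ with derivative $\mu_n'(m) = \tfrac{1}{2}(1 + 2m + \dots + n m^{n-1})$, strictly positive on $(0,2]$. Consequently $\psi_n = \mu_n^{-1}$ is continuous on $[1, 2^n]$ and $C^1$ in the interior, and by the inverse function theorem $\psi_n'(\mu_n(m)) = 1/\mu_n'(m)$. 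The apparent case split between $m \neq 1$ and $m = 1$ in the closed formula for $\mu_n$ is merely cosmetic; the polynomial representation shows that $\mu_n$ is genuinely analytic in $m$ throughout $[0,2]$.

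Weak consistency is then immediate: Proposition \ref{prop:est} gives $\widehat \mu_n \stackrel{\p}{\longrightarrow} \mu_n$ as $x_0, N \to \infty$, and the continuity of $\psi_n$ at $\mu_n$ together with the continuous mapping theorem imply
\[
\widehat m = \psi_n(\widehat \mu_n) \stackrel{\p}{\longrightarrow} \psi_n(\mu_n) = m.
\]
For the CLT I would invoke the delta method. Proposition \ref{prop:est} rewrites as $\sqrt{N}(\widehat \mu_n - \mu_n) \stackrel{\mathcal{D}}{\longrightarrow} N(0, (\sigma_\varepsilon \mu_n \log 2)^2)$; applying $\psi_n$, which is differentiable at $\mu_n(m)$ with derivative $1/\mu_n'(m)$, yields
\[
\sqrt{N}(\widehat m - m) \stackrel{\mathcal{D}}{\longrightarrow} N\!\left( 0, \left( \frac{\sigma_\varepsilon \mu_n(m) \log 2}{\mu_n'(m)} \right)^{\!2} \right),
\]
which is precisely the advertised statement after standardising. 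Indeed, the Taylor expansion of $\psi_n(\widehat \mu_n)$ displayed immediately before the proposition is nothing but this delta-method step written out by hand, with a Slutsky-type argument absorbing the $o_\p(N^{-1/2})$ remainder.

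I do not foresee a genuine obstacle; the whole argument is a routine composition of Proposition \ref{prop:est} with the inverse-function / delta-method machinery. The only points deserving explicit verification are the smoothness of $\mu_n$ across $m = 1$ (clear from the polynomial form above) and the non-degeneracy $\mu_n'(m) > 0$ at the true parameter, which is exactly what ensures the delta method does not collapse the limiting variance. If one wanted to be pedantic about the two-index limit ($x_0 \to \infty$ and $N \to \infty$), the standard recipe is to fix $x_0$ so large that the remainder term from Proposition \ref{prop:est} is negligible and then let $N \to \infty$, which the same Slutsky argument handles uniformly.
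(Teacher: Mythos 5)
Your proposal is correct and follows essentially the same route as the paper: the displayed Taylor expansion of $\psi_n(\widehat\mu_n)$ preceding the proposition is exactly the delta-method step you describe, combined with the continuous mapping theorem for consistency and the identity $\psi_n'(\mu_n(m)) = 1/\mu_n'(m)$. Your explicit checks of smoothness across $m=1$ and of $\mu_n'(m)>0$ are sensible housekeeping that the paper leaves implicit.
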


\section{The dependence of $m$ on 
the antibiotic concentration} \label{sect:proc}

Assuming $p_1 \equiv 0$ we can estimate the mean 
for $c > 0$ fixed as 
described in Proposition \ref{prop:m-est}. Next 
we combine our estimator for different concentrations.

We assume that the offspring mean as a function of $c$ 
satisfies \eqref{eq:m-form}
for some unknown parameters $\alpha > 0$, $\beta > 0$. This is 
a quite flexible model, and we show that empirical data fits 
very well to this model. 
Rewriting \eqref{eq:m-form}
\begin{equation*} %\label{eq:linreg-ab}
\log \alpha + \beta \log c = \log \left( \frac{2}{m(c)} - 1 \right).
\end{equation*}
Assume that we have measurements for $K \geq 2$ different concentrations
$c_1 < c_2 < \ldots < c_K$, and we obtain the estimator for the offspring 
mean $\widehat m(c_i)$, $ i = 1,2,\ldots , K$. Using simple least squares
estimator we obtain the estimates
\begin{equation} \label{eq:alphahat}
\begin{split}
\widehat \beta & = 
\frac{K \sum_{i=1}^K f_i \ell_i - 
\sum_{i=1}^K f_i L_1 } {K L_2 -  L_1^2 } \\
\widehat \alpha & = 
\exp \left\{ \frac{\sum_{i=1}^K f_i - 
\widehat \beta L_1}{K} \right\},
\end{split}
\end{equation}
where to ease notation we write
\begin{equation*} %\label{eq:fl-def}
f_i = \log \left( \frac{2}{\widehat m(c_i)} - 1 \right), \quad
\ell_i = \log c_i,
\end{equation*}
and 
\begin{equation} \label{eq:L-def}
L_1 = \sum_{i=1}^K \ell_i, \quad 
L_2 = \sum_{i=1}^K \ell_i^2.
\end{equation}
Note that by the Cauchy--Schwarz inequality the denominator of $\widehat \beta$
is strictly positive for $K \geq 2$.

The \emph{minimal inhibitory concentration} (MIC) is the smallest 
antibiotic concentration that stops bacteria growth. In mathematical 
terms 
\[
\vartheta := \textrm{MIC} = \min \{ c : m(c ) \leq 1 \},
\]
which, under the assumption \eqref{eq:m-form},
$\vartheta = \textrm{MIC} = \alpha^{-1/\beta}$.
Define the estimator 
\begin{equation*} % \label{eq:def-theta}
 \widehat \vartheta = \widehat \alpha ^{- 1 / \widehat \beta}.
\end{equation*}

In the following statement we summarize the main properties 
of these estimators. Introduce the notation
\begin{equation*} % \label{eq:k-def}
k_i = 
- \frac{2}{m(c_i) (2-m(c_i))} \, 
\frac{\sigma_\varepsilon \mu_n(m(c_i)) \log 2}
{\mu_n'(m(c_i))},
\quad i = 1,2,\ldots, K.
\end{equation*}

\begin{proposition} \label{prop:alphabeta}
Assume that $x_0 \to \infty$ and $N \to \infty$. Then 
$\widehat \alpha, \widehat \beta$, and $\widehat \vartheta$
are weakly consistent estimators of the corresponding quantities.
Furthermore,
\[
\sqrt{N}  (\widehat \alpha - \alpha , \widehat \beta - \beta)
\stackrel{\mathcal{D}}{\longrightarrow} (U, V),
\]
where $(U,V)$ is a two-dimensional normal random vector with 
mean 0 and covariance matrix
\[
\begin{pmatrix}
\sigma^2_{\alpha} & \sigma_{\alpha \beta} \\
\sigma_{\alpha \beta} & \sigma^2_{\beta}
\end{pmatrix},
\]
where
\[
\begin{split}
\sigma^2_{\alpha} & = 
\frac{\alpha^2}{\left(K L_2 - L_1^2 \right)^{2}}   \sum_{i=1}^K k_i^2 
(L_2 - L_1 \ell_i)^2 \\
\sigma_{\alpha \beta} & = 
\frac{\alpha}{\left(K L_2 - L_1^2 \right)^{2}} 
\sum_{i=1}^K k_i^2 (K \ell_i - L_1) (L_2 - L_1 \ell_i) \\
\sigma^2_{\beta} & = 
\frac{1}{\left(K L_2 - L_1^2 \right)^{2}} 
\sum_{i=1}^K  k_i^2 \left(K \ell_i - L_1 \right)^2,
\end{split}
\]
and 
\[
\sqrt{N} ( \widehat \vartheta  - \vartheta ) \stackrel{\mathcal{D}}{\longrightarrow}
N(0, \sigma_\vartheta^2),
\]
with
\[
\sigma_\vartheta^2
= 
\frac{\vartheta^2 \, (\log \alpha)^2}{\beta^2 (K L_2 - L_1^2)^2} \sum_{i=1}^K 
k_i^2 \left( 
\frac{L_2 - L_1 \ell_i}{\log \alpha} - \frac{K \ell_i - L_1}{\beta}
\right)^2.
\]
\end{proposition}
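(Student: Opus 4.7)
The approach is to combine the estimator-level CLT of Proposition \ref{prop:m-est} with two applications of the delta method, sandwiched around a direct linear-regression computation.

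First I would record the starting point. By Proposition \ref{prop:m-est} applied at each concentration $c_i$ (the experiments at different concentrations being independent, so that $\widehat m(c_1),\dots,\widehat m(c_K)$ are independent),
\[
\widehat m(c_i) = m(c_i) + \frac{v_i\,Z_i}{\sqrt N} + o_\p(N^{-1/2}), \qquad
v_i := \frac{\sigma_\varepsilon\,\mu_n(m(c_i))\log 2}{\mu_n'(m(c_i))},
\]
with $Z_1,\dots,Z_K$ iid $N(0,1)$. The delta method applied to $g(m)=\log(2/m-1)=\log(2-m)-\log m$, with derivative $g'(m)=-2/(m(2-m))$, then gives
\[
f_i = \log\alpha + \beta\,\ell_i + \frac{k_i\,Z_i}{\sqrt N} + o_\p(N^{-1/2}),
\]
since $g(m(c_i))=\log\alpha+\beta\ell_i$ by \eqref{eq:m-form} and $k_i = g'(m(c_i))\,v_i$ (the sign is absorbed into the symmetric $Z_i$, and only $k_i^2$ appears in the variances).

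Next I would plug this expansion into the least-squares formulas \eqref{eq:alphahat}, which are linear in the $f_i$'s. The deterministic parts reproduce $(\beta,\log\alpha)$ via the identities
\[
K\sum_i(\log\alpha+\beta\ell_i)\ell_i - L_1\sum_i(\log\alpha+\beta\ell_i) = \beta(KL_2-L_1^2),\qquad
\sum_i(\log\alpha+\beta\ell_i) - \beta L_1 = K\log\alpha,
\]
and the stochastic parts, after the telescoping $(KL_2-L_1^2)-L_1(K\ell_i-L_1)=K(L_2-L_1\ell_i)$, become
\[
\sqrt N(\widehat\beta-\beta) = \frac{1}{KL_2-L_1^2}\sum_{i=1}^K k_iZ_i(K\ell_i-L_1) + o_\p(1),
\]
\[
\sqrt N(\log\widehat\alpha-\log\alpha) = \frac{1}{KL_2-L_1^2}\sum_{i=1}^K k_iZ_i(L_2-L_1\ell_i) + o_\p(1).
\]
Being linear combinations of independent Gaussians, the pair is jointly Gaussian with the announced $\sigma^2_\beta$; the one-dimensional delta method $\alpha=e^{\log\alpha}$ then multiplies by $\alpha$ in the first coordinate and produces $\sigma^2_\alpha$ and $\sigma_{\alpha\beta}$. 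Weak consistency of $\widehat\alpha$ and $\widehat\beta$ follows from the continuous mapping theorem, the denominator $KL_2-L_1^2$ being a positive constant by Cauchy--Schwarz.

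For $\widehat\vartheta=\widehat\alpha^{-1/\widehat\beta}$ I would apply the bivariate delta method to $h(\alpha,\beta)=\alpha^{-1/\beta}$, whose gradient at the true value equals $(-\vartheta/(\alpha\beta),\ \vartheta\log\alpha/\beta^2)$. This yields
\[
\sigma_\vartheta^2=\frac{\vartheta^2}{\beta^2}\left(\frac{\sigma^2_\alpha}{\alpha^2}
-\frac{2\log\alpha}{\alpha\beta}\,\sigma_{\alpha\beta}+\frac{(\log\alpha)^2}{\beta^2}\sigma^2_\beta\right),
\]
and pulling out $(\log\alpha)^2/(KL_2-L_1^2)^2$ collapses the resulting quadratic in $i$ into the perfect square $\bigl((L_2-L_1\ell_i)/\log\alpha-(K\ell_i-L_1)/\beta\bigr)^2$, matching the stated $\sigma_\vartheta^2$; consistency of $\widehat\vartheta$ then follows by continuous mapping from that of $(\widehat\alpha,\widehat\beta)$. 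The only genuine obstacles are clerical: one must check that the $o_\p(N^{-1/2})$ remainders survive composition with the smooth maps $g$ and $h$ (routine), and carry out the algebraic identification of the perfect square, which is the one step where a careful hand is needed.
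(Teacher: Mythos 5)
Your proposal is correct and follows essentially the same route as the paper: expand $f_i$ around $\log\alpha+\beta\ell_i$ via Proposition \ref{prop:m-est} and the delta method, push the linear least-squares formulas through to get $\sqrt{N}(\log\widehat\alpha-\log\alpha,\widehat\beta-\beta)$ as linear combinations of independent Gaussians, and then apply the delta method once more for $\widehat\alpha$ and $\widehat\vartheta$. The only cosmetic difference is that the paper writes the linearization of $\widehat\vartheta-\vartheta$ directly and reads off its variance, whereas you assemble $\sigma_\vartheta^2$ from the covariance matrix via the gradient of $\alpha^{-1/\beta}$ and then collapse the quadratic form into the perfect square; the two computations coincide.
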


\section{Simulation study} \label{sect:sim}

Regardless of the fixed values 
$\bc = (c_1, \ldots, c_K)$ the 
estimator $(\widehat \alpha, \widehat \beta)$ is weakly consistent 
and asymptotically normal as $N \to \infty$. 
However, the asymptotic variances in 
Proposition \ref{prop:alphabeta} do depend on the specific choice 
of $K \geq 2$ and the values $c_1 < \ldots < c_K$. Intuitively, it is clear 
that we should choose values for the concentrations $c_i$ such that 
$m(c_i)$ is not close to 0, nor to 2. 

Consider the following example.
Assume that 
\begin{equation} \label{eq:values}
\alpha = 10, \quad \beta = 1, \quad n= 10, \quad  x_0 = 10^4, \quad
\sigma_\varepsilon = 0.2.
\end{equation}
It turns out that this 
is a reasonable choice, see the azithromycin data in the next section.
The mean offspring function $m(c)$ is given on Figure \ref{fig:mcurve}.

\begin{figure}[h] 
\begin{center}
\includegraphics[width=0.9\textwidth]{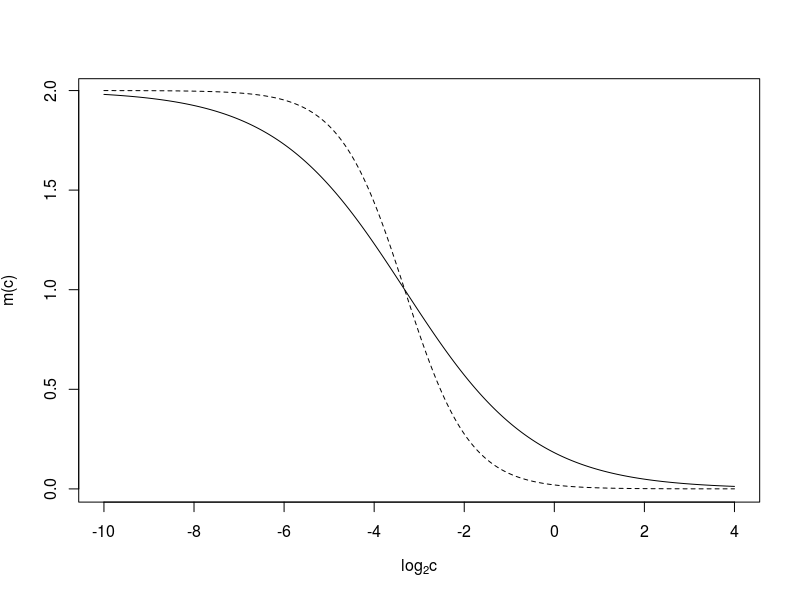}
\end{center}
\vspace*{-.5cm}
\caption{$m(c)$ in a logarithmic scale (solid $(\alpha, \beta) = (10,1)$,
dashed $(\alpha, \beta) = (100,2)$).}
\label{fig:mcurve}
\end{figure}

Choose $K = 3$ different concentrations such that 
$\bc_1 = (2^{-6}, 2^{-4}, 2^{-2})$. Then
for the asymptotic covariances we obtain
\begin{equation} \label{eq:theo-var}
\sigma_\alpha^2 = 8.63, \quad \sigma_{\alpha, \beta} = 0.25, \quad 
\sigma_\beta^2 = 0.00767, \quad \sigma_\vartheta^2 = 0.00012.
\end{equation}
However, as we see in Table \ref{tab:th-cov}
wrong choice of the concentrations might results 
much larger variances. For $\bc_2$
we only observe the process at large concentrations,
killing all the bacteria,
while in case $\bc_3$ the concentration is small, the 
antibiotic does not have any effect.
The combination of large and small values as in $\bc_4$ does not help
either.
Less obvious is the fact that choosing too many points is
contraproductive too. This is the case for $\bc_5$.

\begin{table}[h]
\centering
\begin{tabular}{l | c | c | c | c } 
concentrations & $\sigma^2_\alpha$ &
$\sigma_{\alpha, \beta} $ & $\sigma^2_\beta$ &
$\sigma^2_\vartheta$  \\
\hline \hline 
$\bc_1 = (2^{-6}, 2^{-4}, 2^{-2})$ 
& $8.63$ & $0.25$ &  $ 0.00767 $ & 
$0.00012$ \\ %$1.2 \cdot 10^{-4}$ \\
$\bc_2 = (2^{-2}, 2^{-1}, 1)$ 
& $112$ & $9.41$ & $0.833$ & $ 0.012$ \\
$\bc_3 = (2^{-9}, 2^{-8}, 2^{-7})$ 
& $967$ & $18.7$ & $0.364$ & $ 0.0298$ \\
$\bc_4= (2^{-8}, 2^{-7}, 2^{-1}, 1)$ 
& $58$ & $1.17$ & $0.0257$ & $ 0.00179$ \\
$\bc_5 = (2^{-9}, 2^{-8},\ldots, 1) $ 
& $23$ & $0.568$ & $0.0157$ & 
$0.00051$ %$5.1 \cdot 10^{-4}$
\end{tabular} 
\caption{Asymptotic variances for different choices of $\bc$
for $(\alpha, \beta) = (10,1)$.}
 \label{tab:th-cov}
\end{table}

Choosing the values as in \eqref{eq:values}, 
$K =3$ and $\bc_1 = (2^{-6}, 2^{-4}, 2^{-2})$ we simulated the process as
follows. For a given concentration $c_k$, $k=1,\ldots,K$, 
we calculate $m(c_k)$ from 
\eqref{eq:m-form}, and choose the offspring distribution
\[
p_{0;k} = 1 - \frac{m(c_k)}{2}, \quad 
p_{1;k} = 0, \quad 
p_{2;k} = \frac{m(c_k)}{2}.
\]
With this offspring distribution
we simulate $n=10$ generations of the two-type Galton--Watson process
$(X_n, Y_n)$ described in Section \ref{sect:model}. Therefore we 
obtain $Z_{10; c_k, x_0}$. Independently, we repeat the simulation 
$N$ times for each concentration $c_k$. Independent of the $Z$'s
take an iid sequence of Gaussian random variables 
$\{ \varepsilon_{i;c_k}: i=1,\ldots, N; k=1,\ldots,K \}$ with mean
zero and variance $\sigma_\varepsilon^2$. Take $a = 0$ in \eqref{eq:C-form1}. 
The resulting sequence $\{ C_{i}(c_k, x_0): i=1,\ldots, N; k=1,\ldots,K \}$
is one simulated measurement. From each measurement we calculate the 
estimator $(\widehat \alpha, \widehat \beta)$ as described in
\eqref{eq:alphahat}. We simulated the measurement this way 1000 times.
The resulting means and empirical variances of 
$\sqrt{N} ( \widehat \alpha - \alpha, \widehat \beta - \beta)$
and $\sqrt{N} (\widehat \vartheta - \vartheta)$ are given 
in Table \ref{tab:emp-cov}. We see that the empirical values are 
very close to the theoretical ones in 
\eqref{eq:theo-var} even for $N =3,10$.
It is somewhat surprising that the estimates work even for $N=3$,
which is the suggested number of measurements at each 
concentration in microbiology (see e.g.~\cite{Yuan, Virok}).

\begin{table}[h]
\centering
\begin{tabular}{r || c | c | c | c | c | c | c} 
$N$ & $\overline \alpha$ & $\overline \beta$ & $\overline \vartheta$ &
$\widehat \sigma^2_\alpha$ &
$ \widehat \sigma_{\alpha, \beta} $ & 
$\widehat \sigma^2_\beta$ &
$\widehat \sigma^2_\vartheta$  \\
\hline \hline 
$3$  & 10.359 & 1.004 & 0.0998 & 12.95 & 0.325 & 0.00891 & 0.000121 \\
$10$ & 10.106 & 1.002 & 0.1 & 9.27 & 0.262 & 0.00789 & 0.000116 \\
$50$ & 10.03 & 1.0005 & 0.1 & 9.3 & 0.265 & 0.008 & 0.000124 \\
$100$ & 9.999 & 0.9999 & 0.1 & 8.83 & 0.258 & 0.008 & 0.000117 \\
$\infty$ & 10 & 1 & 0.1 & $8.63$ & $0.25$ &  $ 0.00767 $ & $0.00012$ 
\end{tabular} 
\caption{Empirical mean and variances for $(\alpha, \beta) = (10,1)$.}
 \label{tab:emp-cov}
\end{table}

Next we investigate our estimator with a steeper killing 
curve. Let $\alpha = 100$ and $\beta =2$, and the 
other values as in \eqref{eq:values}.  This is also a possible 
choice, see the ciprofloxacin data. In Figure \ref{fig:mcurve}
wee see the mean offspring function $m(c)$ for 
$(\alpha, \beta ) =(10, 1)$ and 
for $(\alpha, \beta) = (100, 2)$. In the latter case there 
are less relevant concentrations, so we expect larger 
variances. In Table \ref{tab:th-cov2} we see that this is partly true,
however the estimate of $\vartheta$ is good.

\begin{table}
\centering
\begin{tabular}{l | c | c | c | c } 
concentrations & $\sigma^2_{\alpha}$ &
$\sigma_{\alpha, \beta} $ & $\sigma^2_{\beta}$ &
$\sigma^2_\vartheta$  \\
\hline \hline 
$\bc_1 = (2^{-6}, 2^{-4}, 2^{-2})$ 
& 11298 & 35.6  &  $ 0.0124 $ & 
$0.000364$ \\ % $ 3.64 \cdot 10^{-4}$ \\
$\bc_6 = (2^{-5}, 2^{-4}, 2^{-3})$ 
& 1431 & 5.49 & 0.0216  & 
$0.0000126$ \\ % $1.26 \cdot 10^{-5}$ \\
$\bc_7= (2^{-7}, 2^{-6}, \ldots , 2^{-1})$ 
& 42490 & 129.3 & 0.429 & $0.00142$ 
\end{tabular} 
\caption{Asymptotics variances for different choices of $\bc$
for $(\alpha, \beta) = (100,2)$.}
 \label{tab:th-cov2}
\end{table}

\section{The experiment} \label{sect:data}

In the experiment $50,000$ mother cells were infected by 
\emph{Chlamydia trachomatis}. The multiplicity of infection (MOI) value, 
the ratio of the initial number of bacteria and number of mother cells is $0.2$.
That is $x_0 = 10, 000$.
The measurements correspond to 12 different antibiotic concentrations using 
two-fold dilution technique, meaning that $c_i = 2^i c_0$, $i=0,1,\ldots, 11$.
For each concentration 3 measurements were done.
For the technical details of the experiment we refer to
\cite{Virok}.

We analyze two antibiotics: azithromycin and ciprofloxacin.
These antibiotics have different antimicrobial effects:
azithromycin is a bacteriostatic antibiotic, meaning that it 
does not necessarily kill the bacteria, only prevents growth, while
ciprofloxacin is a bactericide antibiotic, which usually kills bacteria.
In Figures \ref{fig:simmeas-azi} and \ref{fig:simmeas-cipro} we see the qPCR 
measurements as a function of $\log_2 c$.

\begin{figure} 
\begin{center}
\includegraphics[width=0.9\textwidth]{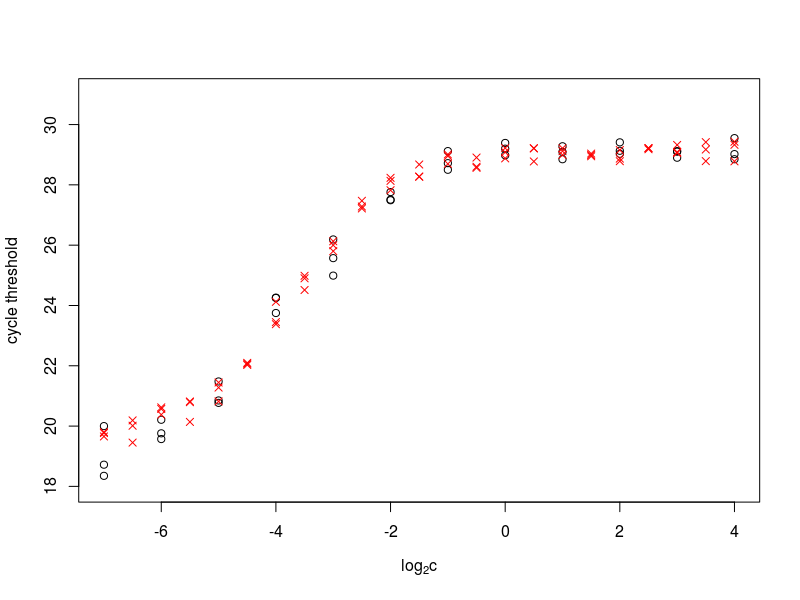}
\end{center}
\vspace*{-.5cm}
\caption{Measured ($\circ$) and simulated $(\times)$ Ct values for 
azithromycin.}
\label{fig:simmeas-azi}
\end{figure}

\begin{figure} 
\begin{center}
\includegraphics[width=0.9\textwidth]{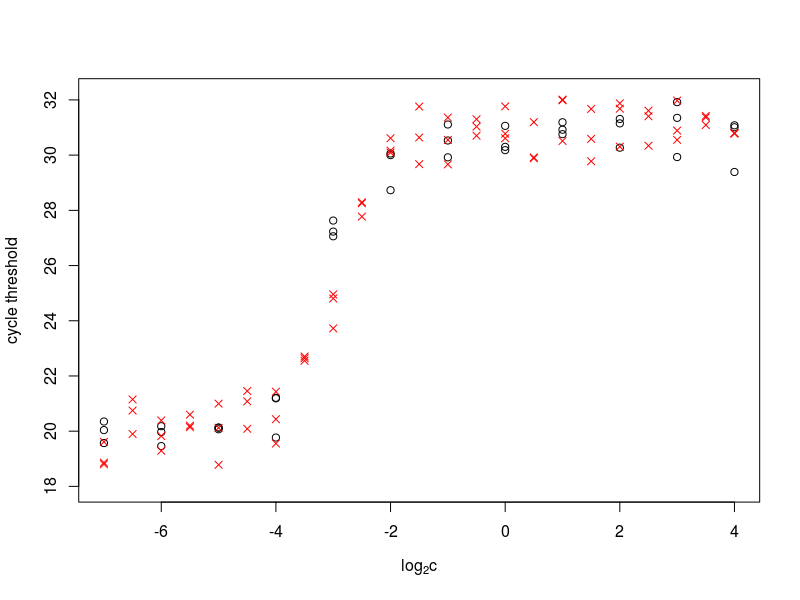}
\end{center}
\vspace*{-.5cm}
\caption{Measured ($\circ$) and simulated $(\times)$ Ct values for 
ciprofloxacin.}
\label{fig:simmeas-cipro}
\end{figure}

If $c$ is large enough, i.e.~at very high antibiotic 
concentration $m(c)$ is close to 0, that is
$Z_{n;x_0, c} \approx x_0$, since all the 
bacteria dies without offspring. Therefore, for $c$ large enough
we can estimate the constant $a$ in \eqref{eq:C-form1} as
\[
\widehat a_N = \frac{1}{N} \sum_{i=1}^{N} C_i(c, x_0) + \log_2 x_0.
\]
Then $\widehat a_N$ is normally distributed with mean $a$ and 
variance $\sigma_\varepsilon^2/N$. Furthermore, $\sigma_\varepsilon$ can also be 
estimated from these data. For azithromycin we used the 
measurements where $c \geq 2^{-1}$, while for ciprofloxacin
$c \geq 1$. 

The number of generations $n$ is typically a fixed small number,
in our experiments around 10. 
If $c $ is small then there is no antibiotical effect so the bacterial
population grows freely, that is $Z_{n, x_0, 0} \approx 2^n x_0$. 
We can estimate $n$ as
\[
\widehat n_N = \widehat a_N - \log_2 x_0 - \frac{1}{N} \sum_{i=1}^N C_i(c, x_0). 
\]
Then $\widehat n_N$ is normally distributed with mean $n$ and
variance $2 \sigma_\varepsilon^2/N$. To estimate $\widehat n_N$ we used 
the smallest possible concentration, $c = 2^{-7}$.

Using Proposition \ref{prop:m-est} we estimate $m(c)$.
In Figures \ref{fig:azi-fitted} and \ref{fig:cipro-fitted}
we see the estimated means and the corresponding fitted 
curve $m(c)$, where the parameters $\alpha, \beta$ are 
estimated as described in \eqref{eq:alphahat}.
In the previous section we showed that the best strategy is 
to choose few concentration where the mean offspring is not 
close to 0, nor to 2. For the azithromycin we chose 
$\bc = (2^{-5}, 2^{-4}, 2^{-2}, 2^{-1})$ and obtained 
$\widehat \alpha = 9.1$, $\widehat \beta = 1.12$, and 
$\widehat \vartheta = 0.139$. (We obtain similar estimates 
for various reasonable choices.) 
For ciprofloxacin in Figure \ref{fig:simmeas-cipro} we 
see a rapid drastic change; for $c \geq 2^{-2}$ the population 
dies out, while for $c \leq 2^{-4}$ the population freely grows. We chose 
$\bc = (2^{-4}, 2^{-3}, 2^{-2})$ and obtained 
$\widehat \alpha = 71.8$, $\widehat \beta = 2.46$, 
$\widehat \vartheta = 0.175$. (These values are less stable 
to the change in $\bc$.)
Simulated measurements with the estimated  values are given in 
Figures \ref{fig:simmeas-azi} and \ref{fig:simmeas-cipro}, where 
the circles are the real measurements and the crosses are the 
simulated ones. In both cases we obtain a remarkably good fit.

\begin{figure} 
\begin{center}
\includegraphics[width=0.9\textwidth]{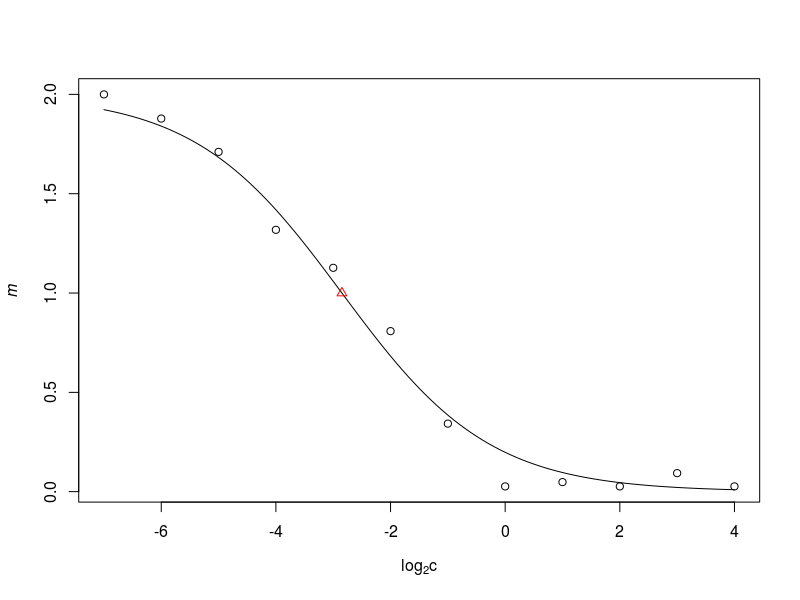}
\end{center}
\vspace*{-.5cm}
\caption{Estimated means and the fitted curve for azithromycin.}
\label{fig:azi-fitted}
\end{figure}

\begin{figure}
\begin{center}
\includegraphics[width=0.9\textwidth]{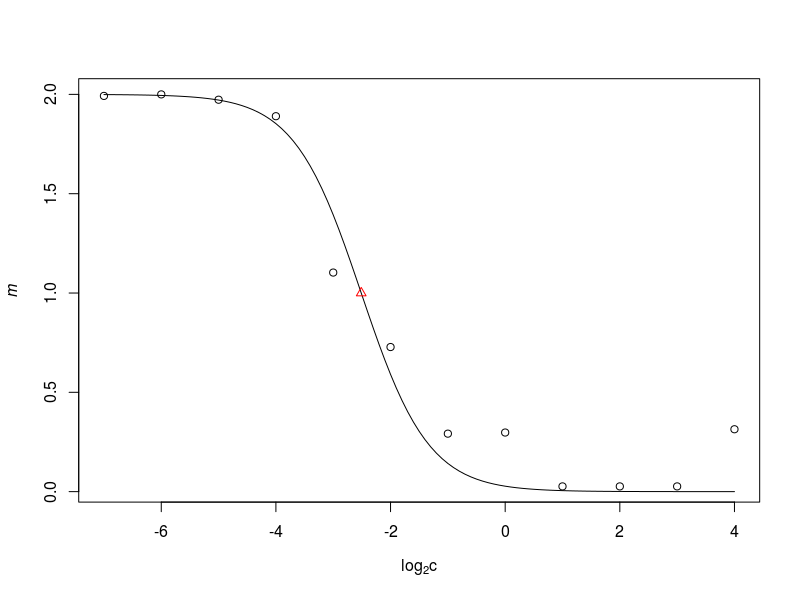}
\end{center}
\vspace*{-.5cm}
\caption{Estimated means and the fitted curve for ciprofloxacin.}
 \label{fig:cipro-fitted}
\end{figure}

\section{Conclusion}

To model the growth of a bacterial population
and its dependence on the antibiotic concentration 
we proposed a simple Galton--Watson model, where 
the offspring distribution depends on the antibiotic 
concentration via \eqref{eq:m-form}. A stochastic model
is more natural compared to 
the previous deterministic model in \cite{Liu}, because
we are able to estimate the parameters of the model and 
investigate the properties of the estimator. 
Taking into account the measurement error using qPCR technique,
from the measurements at different antibiotic concentrations
we obtained 
a weakly consistent asymptotically normal estimator for 
the unknown parameters $(\alpha, \beta)$ in \eqref{eq:m-form}.

The \emph{minimal inhibitory concentration} (MIC), the smallest 
concentration of antibiotic that prevents bacterial growth, is a
very important parameter in pharmacology. Its estimation is 
rather troublesome, since due to the usual 
two-fold dilution technique one can observe only the bacterial
growth under antibiotic concentration  
$c_0, 2c_0, \ldots, 2^k c_0$. Therefore one can claim only that 
the MIC belongs to some interval $[c,2c]$, or give an upper bound 
for it. The vast majority of the literature does not provide a proper 
mathematical model for the growth of the bacterial population, 
only determines the MIC value as the smallest concentration 
without visible bacterial growth. In our framework an explicit
mathematical definition of the MIC is given, and we constructed 
an estimator for it.

Simulation study showed that the estimators work surprisingly
well even if the number of measurements at different concentration
is 3, which is the suggested number in microbiology 
(see e.g.~\cite{Yuan, Virok}).

We applied the model to real measurements, where growth
of \emph{Chlamydia trachomatis} was analyzed treated by two 
different antibiotics.  
Although the mathematical model has only 2 parameters,
we found extremely good fitting to the real data for 
both the bactericide and the bacteriostatic antibiotic.

\section*{Appendix}

\begin{proof}[Proof of Lemma \ref{lemma:evar}]
Conditioning on $\bX_n$
\[
\E \left[ \bX_{n+1} | \bX_n \right]
=
\begin{pmatrix}
m X_n \\
p_0 X_n + Y_n
\end{pmatrix}
= \bX_n \bM,
\]
thus
\[
\E \bX_n = \bX_0 \bM^n.
\]
We have, by induction on $n$ that
\[
\bM^n =
\begin{pmatrix}
m^n & p_0 (1 + \ldots + m^{n-1}) \\
0 & 1
\end{pmatrix},
\]
thus 
\begin{equation*} % \label{eq:ezn}
\begin{split}
\E Z_n & = m^n + p_0 ( 1 + m + \ldots + m^{n-1} ) \\
& =
\begin{cases}
m^n \left( 1 + \frac{p_0 }{m-1} \right) - \frac{p_0 }{m-1}, 
& \text{if } m \neq 1, \\
 1 + n p_0 , & \text{if } m = 1,
\end{cases}
\end{split}
\end{equation*}
as claimed. 
\end{proof}

\begin{proof}[Proof of Proposition \ref{prop:alphabeta}]
Again by Proposition \ref{prop:m-est}
\[
\begin{split}
f_i & =
\log \left( \frac{2}{m(c_i)} - 1 \right)  
- \frac{2}{m(c_i) (2-m(c_i))} \, \frac{\sigma_\varepsilon \mu_n(m(c_i)) \log 2}
{\sqrt{N} \mu_n'(m(c_i))} \zeta_i  + o_\p(N^{-1/2}) \\
 & = \log \left( \frac{2}{m(c_i)} - 1 \right) + \frac{k_i}{\sqrt{N}} 
\zeta_i
+ o_\p(N^{-1/2}),
\end{split}
\]
where $\zeta_i$'s are iid $N(0,1)$, $i=1,2,\ldots, K$. Recall 
the notation in \eqref{eq:L-def}. Then
\[
\begin{split}
\sum_{i=1}^K f_i \left( K \ell_i  - L_1 \right)
=\sum_{i=1}^K 
\left[ \log \left( \frac{2}{m(c_i)} - 1 \right) + \frac{k_i \zeta_i}{\sqrt{N}}
\right]
\left( K \ell_i - L_1 \right)
+ o_\p(N^{-1/2}).
\end{split}
\]
Substituting back into \eqref{eq:alphahat}
\begin{equation} \label{eq:betahat-asy}
\widehat \beta - \beta = 
\frac{1}{\sqrt{N}} 
\sum_{i=1}^K  \zeta_i \, k_i 
\frac{ K \ell_i - L_1}{K L_2 - L_1^2}
+ o_\p(N^{-1/2}), 
\end{equation}
and similarly
\begin{equation} \label{eq:alphahat-asy}
\begin{split}
\log \widehat \alpha - \log \alpha & = \frac{1}{K \sqrt{N}}
\sum_{i=1}^K \zeta_i \, k_i \left( 1 - 
\frac{L_1 (K \ell_i - L_1) }{K L_2 - L_1^2 }
\right) + o_\p(N^{-1/2}) \\
& = \frac{1}{\sqrt{N}}
\sum_{i=1}^K \zeta_i \, k_i 
\frac{L_2 - L_1 \ell_i}{K L_2 - L_1^2 }
+ o_\p(N^{-1/2}),
\end{split}
\end{equation}
which implies
\[
\widehat \alpha - \alpha = \frac{\alpha}{\sqrt{N}}
\sum_{i=1}^K \zeta_i \, k_i 
\frac{ L_2 - L_1 \ell_i   }{K L_2 - L_1^2 }
+ o_\p(N^{-1/2}). 
\]
From \eqref{eq:betahat-asy} and \eqref{eq:alphahat-asy} 
we obtain
\[
\widehat \vartheta - \vartheta = 
- \frac{\vartheta \log \alpha}{\sqrt{N} \beta  (K L_2 - L_1^2)}
\sum_{i=1}^K \zeta_i k_i \left( 
\frac{L_2 - L_1 \ell_i}{\log \alpha} - \frac{K \ell_i - L_1}{\beta}
\right).
\]
\end{proof}

\noindent \textbf{Acknowledgements.}
Kevei is supported 
by the J\'anos Bolyai Research Scholarship of the Hungarian
Academy of Sciences,  
by the NKFIH grant FK124141, and 
by the EU-funded Hungarian grant EFOP-3.6.1-16-2016-00008.
Szalai's research was partially supported by the
the EU-funded Hungarian grant EFOP-3.6.2-16-2017-00015 2020,
and by the 
Ministry of Human Capacities, 
Hungary grant TUDFO/47138-1/2019-ITM.

%\bibliographystyle{plain}
%\bibliography{modell}

\end{document}